  \theoremstyle{plain}
  \newtheorem{theorem}{Theorem}
  \newtheorem{lemma}{Lemma}  
  \newtheorem{fact}{Fact}
  \newtheorem{observation}{Observation}
  \theoremstyle{definition}
  \newtheorem{definition}{Definition}
  \newtheorem{example}{Example}
\title{Faster algorithms for 1-mappability of a sequence}
\author[1]{Mai Alzamel}
\author[1]{Panagiotis Charalampopoulos}
\author[1]{Costas S. Iliopoulos}
\author[1]{Solon P. Pissis}
\author[1,2]{Jakub Radoszewski}
\author[3]{Wing-Kin Sung}
\affil[1]{
Department of Informatics, King's College London, London, UK\\
\texttt{[mai.alzamel,panagiotis.charalampopoulos,costas.iliopoulos,solon.pissis]@kcl.ac.uk}
}
\affil[2]{
Institute of Informatics, University of Warsaw, Warsaw, Poland\\
\texttt{jrad@mimuw.edu.pl}
}
\affil[3]{
Department of Computer Science, National University of Singapore, Singapore\\
\texttt{ksung@comp.nus.edu.sg}
}
\date{\vspace{-5ex}}
\tikzset{
  treenode/.style = {align=center, inner sep=2pt, text centered,
    font=\sffamily},
  arn_r/.style = {treenode, circle, black, font=\sffamily\bfseries, draw=black,
    text width=1.5em},
    arn_t/.style = {treenode, circle, black, thick, double, font=\sffamily\bfseries, draw=black,
    text width=1.5em},
  every edge/.append style={anchor=south,auto=falseanchor=south,auto=false,font=3.5 em},
}
\def\dd{\mathinner{.\,.}}
\newcommand{\cO}{\mathcal{O}}
\newcommand{\st}{\mathcal{ST}}
\newcommand{\tr}{\mathcal{T}}
\newcommand{\SA}{\textsf{SA}}
\newcommand{\LCP}{\textsf{LCP}}
\newcommand{\lcp}{\textsf{lcp}}
\newcommand{\lcs}{\textsf{lcs}}
\newcommand{\rev}{\textsf{rev}}
\newcommand{\iSA}{\textsf{iSA}}
\newcommand{\range}{\textit{range}}
 \newcommand{\defproblem}[3]{
  \vspace{2mm}
\noindent\fbox{
  \begin{minipage}{0.96\textwidth}
  #1\\
  {\bf{Input:}} #2  \\
  {\bf{Output:}} #3
  \end{minipage}
  }
  \vspace{2mm}
}
\begin{document}
\maketitle

\begin{abstract}
In the $k$-mappability problem, we are given a string $x$ of length $n$ and integers $m$ and $k$, and we are asked to count, for each length-$m$ factor $y$ of $x$, the number of other factors of length $m$ of $x$ that are at Hamming distance at most $k$ from $y$. We focus here on the version of the problem where $k=1$. The fastest known algorithm for $k=1$ requires time $\mathcal{O}(mn \log n / \log\log n)$ and space $\mathcal{O}(n)$.  We present two algorithms that require worst-case time $\mathcal{O}(mn)$ and $\cO(n \log^2 n)$, respectively, and space $\mathcal{O}(n)$, thus greatly improving the state of the art. Moreover, we present an algorithm that requires average-case time and space $\mathcal{O}(n)$ for integer alphabets if $m=\Omega (\log n / \log \sigma)$, where $\sigma$ is the alphabet size.
\end{abstract}

\section{Introduction}

The focus of this work is directly motivated by the well-known and challenging application of {\em genome re-sequencing}---the assembly of a genome directed by a reference sequence. New developments in sequencing technologies~\cite{Metzker} allow whole-genome sequencing to be turned into a routine procedure, creating sequencing data in massive amounts. Short sequences, known as {\em reads}, are produced in huge amounts (tens of gigabytes); and in order to determine the part of the genome from which a read was derived, it must be mapped (aligned) back to some reference sequence that consists of a few gigabases.
A wide variety of short-read alignment techniques and tools have been published in the past years to address the challenge of efficiently mapping tens of millions of reads to a genome, focusing on different aspects of the procedure: speed, sensitivity, and accuracy~\cite{Fonseca}. These tools allow for a small number of errors in the alignment.

The \emph{$k$-mappability} problem was first introduced in the context of genome analysis in~\cite{biopaper} (and in some sense earlier in~\cite{ITAB2009}), where a heuristic algorithm was proposed to approximate the solution. The aim from a biological perspective is to compute the mappability of each region of a genome sequence; i.e.~for every factor of a given length of the sequence, we are asked to count how many other times it occurs in the genome with up to a given number of errors. This is particularly useful in the application of genome re-sequencing. By computing the mappability of the reference genome, we can then assemble the genome of an individual with greater confidence by first mapping the segments of the DNA that correspond to regions with low mappability. Interestingly, it has been shown that genome mappability varies greatly between species and gene classes~\cite{biopaper}.

Formally, we are given a string $x$ of length $n$ and integers $m<n$ and $k<m$, and we are asked to count, for each length-$m$ factor $y$ of $x$, the number of other length-$m$ factors of $x$ that are at Hamming distance at most $k$ from $y$.

\begin{example}
  Consider the string $x=\texttt{aabaaabbbb}$ and $m=3$.
  The following table shows the $k$-mappability counts for $k=0$ and $k=1$.
  \begin{center}
    \begin{tabular}{r|c|c|c|c|c|c|c|c}
      \textbf{position} & 0 & 1 & 2 & 3 & 4 & 5 & 6 & 7 \\\hline
      \textbf{factor occurrence} & $\texttt{aab}$ & $\texttt{aba}$ & $\texttt{baa}$ & $\texttt{aaa}$ & $\texttt{aab}$ & $\texttt{abb}$ & $\texttt{bbb}$ & $\texttt{bbb}$ \\\hline
      \textbf{0-mappability} & 1 & 0 & 0 & 0 & 1 & 0 & 1 & 1\\\hline
      \textbf{1-mappability} & 3 & 2 & 1 & 4 & 3 & 5 & 2 & 2
    \end{tabular}
  \end{center}
For instance, consider the position 0. The 0-mappability is 1, as the factor $\texttt{aab}$ occurs also at position 4. The 1-mappability at this position is 3 due to the occurrence of $\texttt{aab}$ at position 4 and occurrences of two factors at Hamming distance $1$ from $\texttt{aab}$: $\texttt{aaa}$ at position $3$ and $\texttt{abb}$ at position 5.
\end{example}

The $0$-mappability problem can be solved in $\mathcal{O}(n)$ time with the well-known LCP data structure~\cite{indLCP}. For $k=1$, to the best of our knowledge, the fastest known algorithm is by Manzini~\cite{Manzini:2015:LCP:2952649.2952678}. This solution runs in $\mathcal{O}(mn \log n / \log\log n)$ time and $\mathcal{O}(n)$ space and works only for strings over a fixed-sized alphabet. Since the problem for $k=0$ can be solved in $\mathcal{O}(n)$ time, one may focus on counting, for each length-$m$ factor $y$ of $x$, the number of other factors of $x$ that are at Hamming distance {\em exactly} $1$ --- instead of at most $1$ --- from $y$.

\medskip
\noindent \textbf{Our contributions.} Here we make the following threefold contribution:%
\begin{description}
  \item[(a)] We present an algorithm that, given a string of length $n$ over a fixed-sized alphabet and a positive integer $m$, solves the $1$-mappability problem in $\mathcal{O}(\min \{mn, n\log^2 n \})$ time and $\mathcal{O}(n)$ space, thus improving on the algorithm of~\cite{Manzini:2015:LCP:2952649.2952678} that requires $\mathcal{O}(mn \log n / \log\log n)$ time and $\mathcal{O}(n)$ space.
  \item[(b)] We present an algorithm to solve the $1$-mappability problem in $\mathcal{O}(mn)$ time and $\mathcal{O}(n)$ space that works for strings over an integer alphabet.
  \item[(c)] We present an algorithm that, given a string $x$ of length $n$ over an integer alphabet $\Sigma$ of size $\sigma>1$, with the letters of $x$ being independent and identically distributed random variables, uniformly distributed over $\Sigma$, and a positive integer $m=\Omega \Big(\frac{\log n}{\log \sigma}\Big)$, solves the $1$-mappability problem for $x$ in average-case time $\mathcal{O}(n)$ and space $\mathcal{O}(n)$. 
%OR $$m \geq 3 \cdot \frac{\log n}{\log \sigma}+3$$
\end{description}

The paper is organised as follows. In Section~\ref{sec:prel}, we provide basic definitions and notation as well as a description of the algorithmic tools we use to design our algorithms. In Sections~\ref{sec:fast} and~\ref{sec:worst}, we provide the average-case and the worst-case algorithms, respectively. We conclude with some final remarks in Section~\ref{sec:conc}.

\section{Preliminaries}\label{sec:prel}
  We begin with some basic definitions and notation.
  Let $x=x[0]x[1]\ldots x[n-1]$ be a \textit{string} of length $|x|=n$ over a finite ordered alphabet $\Sigma$ of size $|\Sigma|=\sigma=\cO(1)$. We also consider the case of strings over an {\em integer alphabet}, where each letter is replaced by its rank in such a way that the resulting string consists of integers in the range $\{1,\ldots,n\}$. 

For two positions $i$ and $j$ on $x$, we denote by $x[i\dd j]=x[i]\ldots x[j]$ the \textit{factor} 
(sometimes called \textit{substring}) of $x$ that 
starts at position $i$ and ends at position $j$ (it is of length $0$ if $j<i$). By $\varepsilon$ we denote
the \textit{empty string} of length 0. 
  We recall that a prefix of $x$ is a factor that starts at position 0 
($x[0\dd j]$) and a suffix of $x$ is a factor that ends at position $n-1$ 
($x[i\dd n-1]$). We denote the reverse string of $x$ by $\textsf{rev}(x)$, i.e.~$\rev(x)=x[n-1]x[n-2]\ldots x[1]x[0]$.

  Let $y$ be a string of length $m$ with $0<m\leq n$. 
  We say that there exists an \textit{occurrence} of $y$ in $x$, or, more 
simply, that $y$ \textit{occurs in} $x$, when $y$ is a factor of $x$.
  Every occurrence of $y$ can be characterised by a starting position in $x$. 
  Thus we say that $y$ occurs at the \textit{starting position} $i$ in $x$ when $y=x[i \dd i + m - 1]$.
  
  The {\em Hamming distance} between two strings $x$ and $y$ of the same length is defined as $d_H(x, y) = |\{i : x[i] \neq y[i],\, i = 0, 1,\ldots, |x| - 1\}|$. If $|x| \neq |y|$, we set $d_H(x, y)=\infty$. If two strings $x$ and $y$ are at Hamming distance $k$, we write $x \approx_k y$. 
  
  The computational problem in scope can be formally stated as follows.

{\defproblem{\textsc{1-mappability}}{A string $x$ of length $n$ and an integer $m$, $1 \leq m <n$}
{An integer array $C$ of size $n-m+1$ such that $C[i]$ stores the number of factors of $x$ that are at Hamming distance $1$ from $x[i\dd i+m-1]$}}

\subsection{Suffix array and suffix tree}
Let $x$ be a string of length $n>0$. We denote by \SA{} the {\em suffix array} of $x$. \SA{} is an integer array of size $n$ storing the starting positions of all (lexicographically) sorted non-empty suffixes of $x$, i.e.~for all 
$1 \leq  r < n$ we have $x[\SA{}[r-1] \dd n-1] < x[\SA{}[r] \dd n - 1]$~\cite{SA}.
  Let \lcp{}$(r, s)$ denote the length of the longest common prefix between
$x[\SA{}[r] \dd n - 1]$ and $x[\SA{}[s] \dd n - 1]$ 
for positions $r$, $s$ on $x$.
  We denote by \LCP{} the {\em longest common prefix} array of $x$ defined by 
\LCP{}$[r]=\lcp{}(r-1, r)$ for all $1 \leq r < n$, and 
\LCP{}$[0] = 0$. The inverse \iSA{} of the array \SA{} is defined by 
$\iSA{}[\SA{}[r]] = r$, for all $0 \leq r < n$. It is known that
  \SA{}, \iSA{}, and \LCP{} of a string of length $n$, over an integer alphabet, can be computed in time and space $\cO(n)$~\cite{Nong:2009:LSA:1545013.1545570,indLCP}. It is then known that a range minimum query (RMQ) data structure over the \LCP{} array, that can be constructed in $\cO(n)$ time and $\cO(n)$ space~\cite{Bender2000}, can answer \lcp{}-queries in $\cO(1)$ time per query~\cite{SA}. A symmetric construction on $\rev(x)$ can answer the so-called \emph{longest common suffix} (\lcs) queries in the same complexity. The \lcp{} and \lcs{} queries are also known as {\em longest common extension} (LCE) queries.

The \textit{suffix tree} $\mathcal{T}(x)$ of string $x$ is a compact trie representing all suffixes of $x$. The nodes of the trie which become nodes of the suffix
tree are called {\em explicit} nodes, while the other nodes are called {\em implicit}. Each edge
of the suffix tree can be viewed as an upward maximal path of implicit nodes starting with an explicit node. Moreover, each node belongs to a unique path of that kind. Thus, each node of the trie can be represented in the suffix tree by the edge it belongs to and an index within the corresponding path. The label of an edge is its first letter.
We let  $\mathcal{L}(v)$  denote the \textit{path-label} of a node $v$, i.e., the concatenation of the edge labels along the path from the root to $v$. We say that $v$ is  path-labelled  $\mathcal{L}(v)$. Additionally, $\mathcal{D}(v)= |\mathcal{L}(v)|$ is used to denote  the \textit{string-depth} of node $v$. Node  $v$ is  a \textit{terminal} node if its path-label is a suffix of $x$, that is, $\mathcal{L}(v) = x[i \dd n-1]$ for some $0 \leq i < n$; here $v$ is also labelled with index $i$. It should be clear that each  factor of $x$ is uniquely represented by either an explicit or an implicit node of $\mathcal{T}(x)$. %The \textit{suffix-link} of a node $v$ with path-label $\mathcal{L}(v)= \alpha w$ is a pointer to the node with path-label $w$, where  $\alpha \in \Sigma$ is a single letter and $w$ is a string. The suffix-link of $v$ is defined if $v$ is an explicit node of $\mathcal{T}(x)$, different from the root.
In standard suffix tree implementations, we assume that each node of the suffix tree is able to access its parent. Once $\mathcal{T}(x)$ is constructed, it can be traversed in a depth-first manner to compute $\mathcal{D}(v)$ for each node $v$. % (JR: Too obvious to write imho.) Let $u$ be the parent of $v$. Then the string-depth $\mathcal{D}(v)$ is computed by adding  $\mathcal{D}(u)$ to the length of the label of edge $(u,v)$. If $v$ is the root, then $\mathcal{D}(v) = 0$.

It is known that the suffix tree of a string of length $n$, over an integer alphabet, can be computed in time and space $\cO(n)$~\cite{farach1997optimal}.
%It is also known that the suffix tree can be preprocessed in time and space $\cO(n)$ so that we can answer \emph{lowest common ancestor} (LCA) queries for any pair of explicit nodes in $\cO(1)$ time per query~\cite{Bender2000}. 
For integer alphabets, in order to access the children of an explicit node by the first letter of their edge label, perfect hashing~\cite{DBLP:journals/jacm/FredmanKS84} can be used.

\section{Efficient Average-Case Algorithm}\label{sec:fast}
In this section we assume that $x$ is a string over an integer alphabet $\Sigma$.
Recall that if two strings $y$ and $z$ are at Hamming distance $1$, we write $y \approx_1 z$. 
%For clarity, we first present an algorithm for $1$-mappability, which is easily generalizable to $k$-mappability. 
\begin{fact}[Folklore]\label{fact:pp}
Given two strings $y$ and $z$ of length $m$, we have that if \mbox{$y \approx_1 z$}, then $y$ and $z$ share at least one factor of length $\lfloor m/2 \rfloor$.
\end{fact}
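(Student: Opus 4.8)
The plan is to prove this by a pigeonhole argument on the positions where $y$ and $z$ differ. Since $y \approx_1 z$, the two strings differ in exactly one position, say position $p$ with $0 \le p < m$, and agree everywhere else. The key observation is that the single mismatch at position $p$ splits the length-$m$ window into two mismatch-free blocks: the prefix $y[0 \dd p-1]$ (which equals $z[0 \dd p-1]$) and the suffix $y[p+1 \dd m-1]$ (which equals $z[p+1 \dd m-1]$).

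First I would argue that at least one of these two matching blocks must be long enough. The prefix block has length $p$ and the suffix block has length $m - 1 - p$. Their lengths sum to $m-1$, so by pigeonhole at least one of them has length at least $\lceil (m-1)/2 \rceil \ge \lfloor m/2 \rfloor$. Here I would verify the elementary inequality $\lceil (m-1)/2 \rceil \ge \lfloor m/2 \rfloor$ by checking the two parities of $m$ separately: when $m$ is even both sides equal $m/2$, and when $m$ is odd the left side is $(m-1)/2$ which equals $\lfloor m/2 \rfloor$.

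Next I would extract the claimed shared factor. If the prefix block is the long one, then $y[0 \dd \lfloor m/2 \rfloor - 1]$ is a length-$\lfloor m/2 \rfloor$ factor occurring in both $y$ and $z$ at position $0$; if the suffix block is the long one, then the length-$\lfloor m/2 \rfloor$ suffix of $y$ serves the same purpose, since it lies entirely to the right of the mismatch and hence coincides in $y$ and $z$. In either case $y$ and $z$ share a common factor of length $\lfloor m/2 \rfloor$, which is exactly the conclusion.

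I do not anticipate a genuine obstacle here, as the statement is folklore; the only point demanding minor care is the off-by-one bookkeeping in the pigeonhole step, namely confirming that the two mismatch-free blocks have total length $m-1$ (not $m$) and that the resulting floor/ceiling bound indeed yields a block of length at least $\lfloor m/2 \rfloor$ rather than something smaller. Once that inequality is pinned down, the rest is immediate.
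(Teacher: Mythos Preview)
Your argument is correct. The paper does not actually supply a proof of this fact (it is labelled ``Folklore'' and left unproved), but the pigeonhole idea you use is exactly the standard one, and it is the same reasoning the paper invokes implicitly when proving the analogous Fact~\ref{fact:m3}.
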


\begin{fact}\label{fact:m3}
Given a string $x$ and any two positions $i,j$ on $x$, we have that if $x[i \dd i+m-1] \approx_1 x[j \dd j+m-1]$, then $x[i \dd i+m-1]$ and $x[j \dd j+m-1]$ have at least one common factor of length $L = \lfloor m/3 \rfloor$ starting at positions $i' \in \{i,\ldots,i+m-L\}$ and $j' \in \{j,\ldots,j+m-L\}$ of $x$, such that $i'-i=j'-j$ and $i'= 0 \pmod L$.
\end{fact}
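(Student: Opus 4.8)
The plan is to turn the claim into a counting statement about a single ``bad'' coordinate and then resolve it by pigeonhole. Since $x[i\dd i+m-1]\approx_1 x[j\dd j+m-1]$, these two length-$m$ factors differ in exactly one coordinate; let $p\in\{0,\dots,m-1\}$ be the relative position of this unique mismatch, so that $x[i+t]=x[j+t]$ for every $t\neq p$. For any aligned pair of starts $i'=i+d$, $j'=j+d$ (which automatically satisfies $i'-i=j'-j=d$), the two length-$L$ windows $x[i'\dd i'+L-1]$ and $x[j'\dd j'+L-1]$ coincide if and only if the mismatch is avoided, i.e.\ $p\notin\{d,\dots,d+L-1\}$. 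Writing this in terms of the absolute start $i'$, the window is a common factor exactly when the mismatch position $i+p$ is not covered, that is $i'>i+p$ or $i'+L-1<i+p$. Thus it suffices to exhibit a single admissible start $i'$.

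Next I would pin down the admissible set. We require $i'\in\{i,\dots,i+m-L\}$, so that the window fits inside both factors, together with the divisibility requirement $i'\equiv 0\pmod L$. The range constraint restricts $i'$ to a block of $m-L+1$ consecutive integers, and since $L=\lfloor m/3\rfloor$ gives $3L\le m$, this block has length $m-L+1\ge 2L+1$. A block of more than $2L$ consecutive integers contains at least two multiples of $L$, so there are at least two candidate starts $i'$ meeting both the range and the congruence conditions.

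Finally comes the pigeonhole step, which carries the real content. The single mismatch can spoil a candidate window only when its start satisfies $i+p-L+1\le i'\le i+p$, i.e.\ $i'$ lies in a block of exactly $L$ consecutive integers. Any $L$ consecutive integers contain exactly one multiple of $L$, so at most one of our $\ge 2$ candidate starts is spoiled. Hence at least one candidate start $i'$ survives, and for that $i'$ (with $j'=i'-i+j$) the windows $x[i'\dd i'+L-1]$ and $x[j'\dd j'+L-1]$ are equal, which is precisely the common factor asserted. I expect the only delicate point to be the off-by-one bookkeeping: confirming that the admissible block genuinely contains at least two multiples of $L$, and that the spoiled range meets the residue class $0\pmod L$ at most once, so that the number of surviving starts is strictly positive.
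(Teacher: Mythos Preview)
Your proof is correct and follows essentially the same approach as the paper: both argue that at least two length-$L$ windows starting at positions $\equiv 0\pmod L$ fit inside the length-$m$ factor, and that a single mismatch can spoil at most one of them. Your version is simply more carefully quantified (the explicit inequalities $m-L+1\ge 2L+1$ and the size-$L$ spoiling range containing exactly one multiple of $L$), whereas the paper's proof is a two-sentence sketch of the same pigeonhole.
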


\begin{proof}
It should be clear that every factor of $x$ of length $m$ fully contains at least two factors of length $L$ starting at positions equal to $0$ mod $L$. Then, if $x[i \dd i+m-1]$ and $x[j \dd j+m-1]$ are at Hamming distance $1$, analogously to Fact~\ref{fact:pp}, at least one of the two factors of length $L$ that are fully contained in $x[i \dd i+m-1]$ occurs at a corresponding position in $x[j \dd j+m-1]$; otherwise we would have a Hamming distance greater than 1.
\end{proof}

We first initialize an array $C$ of size $n-m+1$, with $0$ in all positions; for all $i$, $C[i]$ will eventually store the number of factors of $x$ that are at Hamming distance $1$ from $x[i \dd i+m-1]$.
We apply Fact~\ref{fact:m3} by implicitly splitting the string $x$ into $B = {\lfloor \tfrac{n}{\lfloor m/3 \rfloor}\rfloor}$ {\em blocks} of length $L=\lfloor m/3 \rfloor$---the suffix of length $n \bmod \lfloor m/3 \rfloor$ is not taken as a block---starting at the positions of $x$ that are equal to $0$ mod $L$.
In order to find all pairs of length-$m$ factors that are at Hamming distance $1$ from each other, we can find all the exact matches of every block and try to extend each of them to the left and to the right, allowing at most one mismatch. 
However, we need to tackle some technical details to correctly update our counters and avoid double counting.

We start by constructing the \SA{} and \LCP{} arrays for $x$ and $\rev(x)$ in $\cO(n)$ time. 
We also construct RMQ data structures over the \LCP{} arrays for answering LCE queries in constant time per query. By exploiting the \LCP{} array information, we can then find in $\cO(n)$ time all maximal sets of indices such that the longest common prefix between any two of the suffixes starting at these indices is at least $L$ and at least one of them is the starting position of some block. %We can avoid considering the starting position of the last ``block'' of length $n \bmod L$, as any factor of length $m$ of $x$ containing any of its positions, necessarily contains two full blocks.

Then for each such set, denoted by $P$, we have to do the following procedure for each index $i \in P$ such that $i=0 \pmod L$.

For every other $j \in P$, we try to extend the match by asking two LCE queries in each direction. I.e., we ask an $\lcs(i-1,j-1)$ query to find the first mismatch positions $\ell_1$ and $\ell'_1$, respectively, and then $\lcs(\ell_1-1,\ell'_1-1)$ to find the second mismatch ($\ell_2$ and $\ell'_2$, respectively). A symmetric procedure computes the mismatches $r_1,r'_1$ and $r_2,r'_2$ to the right, as shown in Figure~\ref{lces}. We omit here some technical details with regards to reaching the start or end of $x$.

  \begin{figure}
  \begin{center}
  \begin{tikzpicture}[xscale=0.6]
    \foreach \x/\c in {-0.7/\ell_2,0/,1/p,2/,3.1/\ell_1,4/,5.2/q,6/,7.5/i,8/,9/,10/,11/i+L-1,12/,13/,14.1/r_1,15/,16/,17/,18/, 18.7/r_2}{
      \draw (\x,0) node[above] {$\c$};
    }
    \draw[latex-,decorate,decoration={snake,amplitude=.4mm,segment length=2mm}] (3.8,0.7) -- (6.8,0.7);
    \draw[-latex,decorate,decoration={snake,amplitude=.4mm,segment length=2mm}] (10.5,0.7) -- (13.3,0.7);
    \draw[latex-,decorate,decoration={snake,amplitude=.4mm,segment length=2mm}] (0,0.7) -- (2.4,0.7);
    \draw[-latex,decorate,decoration={snake,amplitude=.4mm,segment length=2mm}] (14.4,0.7) -- (18,0.7);
    \draw[latex-] (7.2,0.7) -- (9.7,0.7);
    \draw[-latex] (9.7,0.7) -- (10.1,0.7);
    \draw (3.1,0.45) node[above] {$X$};
    \draw (14,0.45) node[above] {$X$};
    \draw (-0.7,0.45) node[above] {$X$};
    \draw (18.7,0.45) node[above] {$X$};
    
\begin{scope}[yshift=1.5cm]
    \foreach \x/\c in {-0.7/\ell'_2,0/,1/p',2/,3.1/\ell'_1,4/,5.2/q',6/,7.5/j,8/,9/,10/,11/j+L-1,12/,13/,14.1/r'_1,15/,16/,17/,18/, 18.7/r'_2}{
      \draw (\x,0) node[above] {$\c$};
    }
    \draw[latex-,decorate,decoration={snake,amplitude=.4mm,segment length=2mm}] (3.8,0.7) -- (6.8,0.7);
    \draw[-latex,decorate,decoration={snake,amplitude=.4mm,segment length=2mm}] (10.5,0.7) -- (13.3,0.7);
    \draw[latex-,decorate,decoration={snake,amplitude=.4mm,segment length=2mm}] (0,0.7) -- (2.4,0.7);
    \draw[-latex,decorate,decoration={snake,amplitude=.4mm,segment length=2mm}] (14.4,0.7) -- (18,0.7);
    \draw[latex-] (7.2,0.7) -- (9.7,0.7);
    \draw[-latex] (9.7,0.7) -- (10.1,0.7);
    \draw (3.1,0.45) node[above] {$X$};
    \draw (14,0.45) node[above] {$X$};
    \draw (-0.7,0.45) node[above] {$X$};
    \draw (18.7,0.45) node[above] {$X$};
\end{scope}
  \end{tikzpicture}
  \end{center}
  \caption{Performing two LCE queries in each direction.}
  \label{lces}
  \end{figure}
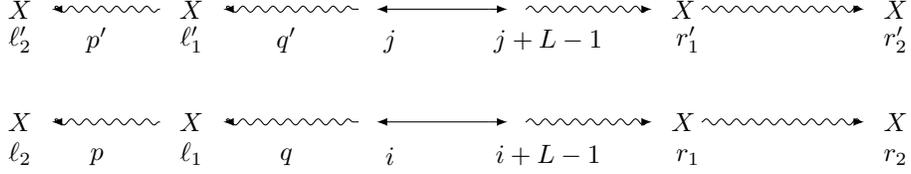

%At this stage we want to increment the counter only for the factors of $x$ that contain this block fully and are at Hamming distance $1$ --- as well as the respective matching factor. This means we consider less than $m$ positions for each such match.

Now we are interested in positions $p$ such that $\ell_2 < p \leq \ell_1$ and $i+ L -1 \leq p+m-1 < r_1$ and positions $q$ such that $\ell_1 < q \leq i$ and $r_1 \leq q+m-1 < r_2$. Each such position $p$ (resp. $q$) implies that $x[p\dd p+m-1] \approx_1 x[p'\dd p'+m-1]$, where $p'=j-(i-p)$. Henceforth, we only consider positions of the type $p,p'$.

Note that if $x[p\dd p+m-1] \approx_1 x[p'\dd p'+m-1]$, we will identify the unordered pair $\{p,p'\}$ based on the described approach $t_{p, p'}$ times, where $t_{p,p'}$ is the total number of full blocks contained in $x[p \dd p+m-1]$ and in $x[p' \dd p'+m-1]$ after the mismatch position. It is not hard to compute the number $t_{p, p'}$ in $\cO(1)$ time based on the starting positions $p$ and $p'$ as well as $\ell_1$ and $r_1$ each time we identify $x[p\dd p+m-1] \approx_1 x[p'\dd p'+m-1]$. To avoid double counting, we then increment the $C[p]$ and $C[p']$ counters by $1/t_{p, p'}$.

By $\textsf{EXT}_{i,j}$ we denote the time required to process a pair of elements $i,j$ of a set $P$ such that at least one of them, $i$ or $j$, equals $0$ mod $L$. 
\begin{lemma}\label{lem:ext}
The time $\textsf{EXT}_{i,j}$ is $\mathcal{O}(m)$.
\end{lemma}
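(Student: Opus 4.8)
The plan is to split $\textsf{EXT}_{i,j}$ into two parts — the cost of locating the four mismatch positions, and the cost of enumerating and charging the candidate positions $p$ and $q$ — and to argue that the first is $\cO(1)$ while the second is $\cO(m)$.

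First I would account for the mismatch-finding phase. For the pair $(i,j)$ we issue two $\lcs$ queries to the left (one to obtain $\ell_1,\ell'_1$, and a second started just beyond the first mismatch to obtain $\ell_2,\ell'_2$) and, symmetrically, two $\lcp$ queries to the right for $r_1,r'_1$ and $r_2,r'_2$. Each query is answered in constant time by the RMQ data structures built over the \LCP{} arrays of $x$ and $\rev(x)$, so this phase costs $\cO(1)$; the boundary cases in which an extension reaches position $0$ or $n-1$ only truncate a query and are also handled in $\cO(1)$.

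The crux is to bound the number of candidate positions by $\cO(m)$. A candidate $p$ satisfies $\ell_2 < p \le \ell_1$ and $i+L-1 \le p+m-1 < r_1$. Since $\ell_1 < i$, the left condition gives $p \le i-1$, while $i+L-1 \le p+m-1$ gives $p \ge i+L-m$; hence every such $p$ lies in $\{i+L-m,\ldots,i-1\}$, a set of exactly $m-L=m-\lfloor m/3\rfloor=\cO(m)$ positions. The argument for $q$ is symmetric: from $\ell_1 < q \le i$ and $r_1 \le q+m-1 < r_2$, together with $r_1 \ge i+L$ (the block matches exactly for $L$ letters), the length-$m$ window $x[q\dd q+m-1]$ must contain both position $i$ and position $r_1$, so $q \in \{r_1-m+1,\ldots,i\}$, whose size $m-(r_1-i)\le m-L$ is again $\cO(m)$. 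Thus there are $\cO(m)$ candidates in total, and I would simply enumerate each interval one position at a time.

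Finally, each candidate is charged in $\cO(1)$ time: for $p$ (resp. $q$) set $p'=j-(i-p)$, compute $t_{p,p'}$ in constant time from $p,p',\ell_1,r_1$ as already observed, and add $1/t_{p,p'}$ to $C[p]$ and to $C[p']$. Summing the two phases yields $\textsf{EXT}_{i,j}=\cO(1)+\cO(m)\cdot\cO(1)=\cO(m)$. The delicate point I expect is pinning down the two interval bounds cleanly and verifying the enumeration never steps outside $x$; the per-candidate counter arithmetic is routine.
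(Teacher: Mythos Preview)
Your proposal is correct and follows essentially the same approach as the paper's own proof: bound the candidate positions by an interval of length $m-L=\cO(m)$ (the paper writes this as the subset $\{(i-m+L,j-m+L),\ldots,(i-1,j-1)\}$) and process each candidate in $\cO(1)$ time. You are simply more explicit than the paper, separating out the $\cO(1)$ cost of the four LCE queries and treating the $q$-type positions alongside the $p$-type ones, whereas the paper handles only the $p$-type case after remarking ``henceforth, we only consider positions of the type $p,p'$''.
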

\begin{proof}
Given $i,j \in P$, with at least one of them equal to $0$ mod $L$, we can find the pairs $(p,p')$ of positions that satisfy the inequalities discussed above in $\cO(m)$ time. They are a subset of $\{(i-m+L,j-m+L), \ldots ,(i-1,j-1)\}$. For each such pair $(p,p')$ we can compute $t_{p, p'}$ and increment $C[p]$ and $C[p']$ accordingly in $\cO(1)$ time. The total time to process all pairs $(p,p')$ for given $i,j$ is thus $\mathcal{O}(m)$. 
\end{proof}

\begin{theorem}\label{the:1map}
Given a string $x$ of length $n$ over an integer alphabet $\Sigma$ of size $\sigma>1$ with the letters of $x$ being independent and identically distributed random variables, uniformly distributed over $\Sigma$, the \textsc{1-mappability} problem can be solved in average-case time $\mathcal{O}(n)$ and space $\mathcal{O}(n)$ if $m \geq 3 \cdot \frac{\log n}{\log \sigma}+3$. 
\end{theorem}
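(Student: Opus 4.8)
The plan is to bound the expected total running time via linearity of expectation, noting that the deterministic preprocessing (building the \SA{}, \LCP{} and RMQ structures for $x$ and $\rev(x)$, and extracting the maximal sets $P$) already costs $\cO(n)$ and is not random. The only randomness-sensitive cost is the pairwise extension phase. By the structure of the algorithm, each block start $i$ (there are $B=\lfloor n/L\rfloor$ of them, with $L=\lfloor m/3\rfloor$) is paired with every other element $j$ of its set $P$, and by \cref{lem:ext} each such pair costs $\cO(m)$. Since the set $P$ containing $i$ is precisely the set of occurrences in $x$ of the length-$L$ factor $x[i\dd i+L-1]$ (whenever that set contains a block start), the total extension time is
\[
\sum_{i} \cO(m)\cdot \mathrm{occ}_L(i), \qquad \mathrm{occ}_L(i)=\big|\{\, j : x[j\dd j+L-1]=x[i\dd i+L-1]\,\}\big|,
\]
where $i$ ranges over the $B$ block starts.

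First I would compute $\mathbb{E}[\mathrm{occ}_L(i)]$. Writing $\mathrm{occ}_L(i)=\sum_j \mathbb{1}[\,x[j\dd j+L-1]=x[i\dd i+L-1]\,]$ and applying linearity, it suffices to bound $\Pr[\,x[j\dd j+L-1]=x[i\dd i+L-1]\,]$ for each fixed $j$. For $j=i$ this is $1$, and for non-overlapping windows it is $\sigma^{-L}$ by independence. The step I expect to require the most care is the overlapping case $0<|i-j|=d<L$: here equality of the two windows is equivalent to the combined window of length $L+d$ being $d$-periodic, and a uniformly random string of length $L+d$ is $d$-periodic with probability $\sigma^{d-(L+d)}=\sigma^{-L}$. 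Thus every $j\neq i$ contributes exactly $\sigma^{-L}$, overlapping or not, giving
\[
\mathbb{E}[\,\mathrm{occ}_L(i)\,]\ \le\ 1+n\,\sigma^{-L}.
\]

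Summing over the $B=\cO(n/L)$ block starts and using $m=\cO(L)$, the expected extension time becomes $\cO\big(n\,(1+n\sigma^{-L})\big)=\cO(n+n^2\sigma^{-L})$. Finally I would invoke the hypothesis $m\ge 3\cdot\frac{\log n}{\log\sigma}+3$, which forces $L=\lfloor m/3\rfloor\ge\frac{\log n}{\log\sigma}$ and hence $\sigma^{L}\ge n$ (the additive $+3$ is exactly what absorbs the rounding in the floor). This yields $n\sigma^{-L}\le 1$, so the bound collapses to $\cO(n)$, and together with the $\cO(n)$ preprocessing the expected time is $\cO(n)$. The space bound is unconditional: all the arrays and the counter $C$ occupy $\cO(n)$, and since each position of $x$ lies in at most one set $P$ (the one determined by its own length-$L$ factor), the sets can be stored, or streamed one at a time, within $\cO(n)$ space.
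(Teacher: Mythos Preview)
Your proposal is correct and follows essentially the same line as the paper: bound the extension work by $\cO(m)$ per pair (the paper's \cref{lem:ext}), argue that the expected number of pairs is at most $Bn/\sigma^{L}$ (plus a harmless $\cO(B)$ from the $j=i$ term), simplify $m\cdot B=\cO(n)$, and then use the hypothesis $m\ge 3\log n/\log\sigma+3$ to get $\sigma^{L}\ge n$. Your explicit handling of the overlapping-window probability (showing it is still exactly $\sigma^{-L}$ via the periodicity observation) is a nice addition that the paper leaves implicit.
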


\begin{proof}
The time and space required for constructing \SA{} and \LCP{} tables for $x$ and $\rev(x)$ and the RMQ data structures over the \LCP{} tables is $\mathcal{O}(n)$.

Let $B$ denote the number of blocks over $x$, and let $L$ denote the block length. We set
$$\quad L={\lfloor \tfrac{m}{3} \rfloor}, \quad B = {\lfloor \tfrac{n}{L}\rfloor}$$
\noindent to apply Fact~\ref{fact:m3}. Recall that by $P$ we denote a maximal set of indices of the \LCP{} table such that the length of the longest common prefix between any two suffixes starting at these indices is at least $L$ and at least one of them is the starting position of some block. Processing all such sets $P$ requires time
$$ \textsf{EXT}_{i,j} \cdot \textit{Occ}$$
\noindent where $\textsf{EXT}_{i,j}$ is the time required to process a pair $i,j$ of elements of a set $P$; and \textit{Occ} is the sum of the multiples of the cardinality of each set $P$ times the number of the elements of set $P$ that are equal to $0$ mod $L$. By Lemma~\ref{lem:ext} we have that $ \textsf{EXT}_{i,j}=\mathcal{O}(m)$. Additionally, by the stated assumption on the string $x$, the expected value for \textit{Occ} is no more than $\frac{Bn}{\sigma^L}$. Hence, the algorithm on average requires time
$$\mathcal{O}(n + m \cdot \frac{B \cdot n}{\sigma^{L}}).$$
Assuming that $m > 3$, we have the following:

$$m \cdot \frac{B \cdot n}{\sigma^{L}} = 
{\frac {m\cdot \lfloor \tfrac{n}{\lfloor m/3 \rfloor}\rfloor \cdot n }{\sigma ^ {\lfloor\frac{m}{3}\rfloor}}}
\leq {\frac {m\cdot  (\tfrac{n}{ m/3-1 }) \cdot n}{\sigma ^ {\frac{m}{3}-1}}}
\leq {\frac {12n^2}{n ^ {\frac{\log \sigma}{\log n} (\frac{m}{3}-1)}}}
= 12n^{2-\frac{(m-3) \log \sigma}{3 \log n}}.$$

\noindent Consequently, in the case when $$m \geq 3 \cdot \frac{\log n}{\log \sigma}+3$$ the algorithm requires $\mathcal{O}(n)$ time on average. The extra space usage is $\cO(n)$.
\begin{comment}
$$m \cdot \frac{B}{\sigma^{L}} = 
{\frac {m\cdot \lfloor \tfrac{n}{\lfloor m/3 \rfloor}\rfloor }{\sigma ^ {\lfloor\frac{m}{3}\rfloor}}}
\leq {\frac {m\cdot  (\tfrac{n}{ m/3-1 }) }{\sigma ^ {(\frac{m}{3}-1)}}}
\leq {\frac {3m\cdot  (\tfrac{n}{ m-3 }) }{\sigma ^ {(\frac{\log n}{\log \sigma})}}}
\leq 12,$$
\end{comment}
\end{proof}

\section{Efficient Worst-Case Algorithms}\label{sec:worst}

\subsection{$\cO(mn)$-time and $\cO(n)$-space algorithm}\label{subsec:nm}

\begin{comment}
\begin{itemize}
\item We first build the suffix trees $\tr(x)$ and $\tr(\textsf{rev}(x))$.

\item for string depth j=1 to m

\subitem for k=0 to n-m

\subsubitem Find the depth -(m-j) node $u_{m-j}$ in $\tr(x)$ for $x[k+1 \dd k+m-j]$
\subsubitem Find the depth -(j-1) node $v_{j-1}$ in $\tr(\textsf{rev}(x))$ for reverse of $x[k-j+1 \dd k-1]$
\subsubitem Insert ($u_{m-j}$, $x[k]$, $k$) into the node $v_{j-1}$.
\subitem For every node $v$ of depth $j-1$ in $\tr(\textsf{rev}(x))$
\subsubitem We have a set of pairs in node v and we group them by the node variable
\subsubitem For each node, it is grouped into 4 groups, say, { (u, A, k1), (u, A, k2), (u, A, kA) }, { (u, C, k1), (u, C, k2), (u, C, kC) }, { (u, G, k1), (u, G, k2), (u, G, kG) }, { (u, T, k1), (u, T, k2), (u, T, kT) }
\subsubitem For x=1,…,q, increment count($k_x-j+1$) accordingly.

\end{itemize}
\end{comment}

In this section we assume that $x$ is a string over an integer alphabet $\Sigma$.
The main idea is that we want to first find all pairs $x[i_1 \dd i_1+m-1] \approx_1 x[i_2 \dd i_2+m-1]$ that have a mismatch in the first position, then in the second, and so on.

Let us fix $0 \le j < m$. In order to identify the pairs $x[i_1 \dd i_1+m-1] \approx_1 x[i_2 \dd i_2+m-1]$ with $x[i_1+j] \neq x[i_2+j]$ (i.e. with the mismatch in the $j^{th}$ position), we do the following. 
For every $i=0,1, \ldots ,n-m$, we find the explicit or implicit node $u_{i,j}$ in $\tr(x)$ that represents $x[i \dd i+j-1]$ and the node $v_{i,j}$ in $\tr(\rev(x))$ that represents $\rev(x[i+j+1 \dd i+m-1])=\rev(x)[n-i-m \dd n-i-j-2]$.
In each such node $v_{i,j}$, we create a set $V(v_{i,j})$---if it has not already been created---and insert the triple $(u_{i,j},x[i+j],i)$.

When we have done this for all possible starting positions of $x$, we group the triples in each set $V(v)$ by the node variable (i.e., the first component in the triples). For each such group in $V(v)$ we count the number of triples that have each letter of the alphabet and increment array $C$ accordingly.
More precisely, if $V(v)$ contains $q$ triples that correspond to the same node $u$, among which $r$ correspond to the letter $c \in \Sigma$, then for each such triple $(u,c,i) \in V(v)$ we increment $C[i]$ by $q-r$; we subtract $r$ to avoid counting equal factors in $C$.
Before we proceed with the computations for the next index $j$, we delete all the sets $V(v)$.
We formalize this algorithm, denoted by \textsc{$1$-Map}, in the pseudocode presented below and provide an example.

\bigskip
\begin{algo}{1-Map}{x,n,m}
\SET{\tr(x)}{\Call{SuffixTree}{x}}
\SET{\tr(\rev(x))}{\Call{SuffixTree}{\rev(x)}}
\DOFOR{\mbox{string-depth $j=0$ {\bf to} $m-1$}}
	\DOFOR{\mbox{$i=0$ {\bf to} $n-m$}}
    	\SET{u_{i,j}}{\Call{Node$_{\tr(x)}$}{x[i\dd i+j-1]}}
   		%\ACT{\mbox{Find the depth-($m-j$) node $u_{m-j}$ in $\tr(x)$ for $x[k+1 \dd k+m-j]$}}
    	\SET{v_{i,j}}{\Call{Node$_{\tr(\rev(x))}$}{\rev(x)[n-i-m \dd n-i-j-2]}}
       	%\ACT{\mbox{Find the depth-($j-1$) node $v_{j-1}$ in $\tr(\textsf{rev}(x))$ for reverse of $x[k-j+1 \dd k-1]$}}
        %\SET{V(v_{i,j})}{\Call{Insert}{(u_{i,j}, x[i+j], i)}}
        \ACT{\mbox{Insert $(u_{i,j}, x[i+j], i)$ to $V(v_{i,j})$}}
       	%\ACT{\mbox{Insert ($u_{m-j}$, $x[k]$, $k$) into the node $v_{j-1}$}}
     \OD
     \DOFOR{\mbox{every node $v$ of string-depth $m-j-2$ in $\tr(\rev(x))$}}
       	\ACT{\mbox{Group triples in $V(v)$ by the node variable}}
       	\DOFOR{\mbox{a group corresponding to the node $u$ in $V(v)$}}
        	\ACT{\mbox{Count number of triples with each letter $c \in \Sigma$}}
       		\ACT{\mbox{Update $C[i]$ accordingly for each triple $(u, c, i)$}}
        \OD
        \ACT{\mbox{Delete $V(v)$}}
     \OD
\OD
\end{algo}

\bigskip
\begin{example}
Suppose we have $V(v)=\{ (u, \texttt{A}, i_1), (u, \texttt{A}, i_2), (u, \texttt{A}, i_3), (u, \texttt{C}, i_4),(u, \texttt{C}, i_5), (u, \texttt{C}, i_6), (u, \texttt{G}, i_7),$\linebreak $(u, \texttt{G}, i_8), (u, \texttt{T}, i_9) \}$, for some distinct positions $i_1,i_2,$ $\ldots,i_9$. We then increment $C[i_1]$, $C[i_2]$, $C[i_3]$, $C[i_4]$, $C[i_5]$, and $C[i_6]$ by $6$; $C[i_7]$ and $C[i_8]$ by $7$; and $C[i_9]$ by $8$.
\end{example}

We now analyze the time complexity of this algorithm. The algorithm iterates $j$ from $0$ to $m-1$. 
In the $j^{th}$ iteration, we need to compute $\{ u_{i,j}, v_{i,j} \mid i = 0, \ldots, n-m\}$.
When $j=0$, $u_{i,0}$ for every $i$ is the root of $\tr(x)$ and we can find $v_{i,0}$ for all $i$ na\"ively in $\cO(mn)$ time. 
%Hence, the $0^{th}$ iteration takes $\cO(mn)$ time.
For $j>0$, $v_{i,j}$ can be found in $\cO(1)$ time from $v_{i,j-1}$ by moving one letter up in ${\cal T}(\mathtt{rev}(x))$ for all $i$, while $u_{i,j}$ can be obtained from $u_{i,j-1}$ by going down in ${\cal T}(x)$ based on letter $x[i+j]$. We then include $(u_{i,j},x[i+j],i)$ in $V(v_{i,j})$. 
%Hence, it takes $\cO(n)$ time to compute $u_{i,j}$ and $v_{i,j}$ for the $j$-th iteration where $j>0$.

%Hence, the $0$-iteration takes $\cO(nm)$ time while the $j$-th iteration takes $\cO(n)$ time for $0< j \leq m-1$. Overall, the algorithm requires $\cO(mn)$ time. 

This requires in total $\cO(mn)$ randomized time due to perfect hashing~\cite{DBLP:journals/jacm/FredmanKS84} which allows to go down from a node in ${\cal T}(x)$ (or in ${\cal T}(\rev(x))$) based on a letter in $\cO(1)$ randomized time. We can actually avoid this randomization, as queries for a particular child of a node are asked in our solution in a somewhat off-line fashion: we use them only to compute $v_{i,0}$ ($m$ times) and $u_{i,j}$ (from $u_{i,j-1}$).

\begin{observation}
For an integer alphabet $\Sigma=\{1,\ldots,n\}$, one can answer off-line $\cO(n)$ queries in $\tr(x)$ asking for a child of an explicit or implicit node $u$ labelled with the letter $c \in \Sigma$ in (deterministic) $\cO(n)$ time.
\end{observation}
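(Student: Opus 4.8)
The plan is to reduce the whole batch of queries to a single instance of integer sorting, which is linear for keys in $\{0,\ldots,\cO(n)\}$, thereby avoiding the randomization of perfect hashing. First I would dispose of the easy case. An \emph{implicit} node $u$ lies in the interior of some edge and has a unique continuation, so a query asking for its child labelled $c$ is answered in $\cO(1)$: using the representation of $u$ as an edge together with an offset, the next letter along the edge is a single position of $x$, and we simply test whether it equals $c$. Hence only queries on \emph{explicit} nodes require real work, and I treat all of them together in a single batch.

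For the explicit-node queries I would first set up integer encodings. Since $\tr(x)$ has $\cO(n)$ explicit nodes, a depth-first traversal assigns each explicit node a distinct identifier in $\{0,\ldots,\cO(n)\}$ in $\cO(n)$ time. Each explicit node stores, for every outgoing edge, the first letter of that edge (its label) together with a pointer to the child; summing over all nodes there are only $\cO(n)$ such edges. I would build one array $E$ of triples $(\mathrm{id}(u),\,\text{first letter},\,\text{child})$ ranging over all edges, and one array $Q$ of pairs $(\mathrm{id}(u),c)$ ranging over all explicit-node queries. Every coordinate of every key then lies in $\{0,\ldots,\cO(n)\}$.

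Next I would sort. Merging $E$ and $Q$ into one array and ordering it by the composite key $(\text{node identifier},\text{letter})$ costs $\cO(n)$ via two passes of counting sort (stable radix sort by letter, then by identifier), precisely because both coordinates are $\cO(n)$-bounded integers. After sorting, all entries sharing a key are contiguous, and since the children of an explicit node carry pairwise distinct first letters, each group contains at most one edge entry. A single linear scan then answers every query $(\mathrm{id}(u),c)$: if its group also contains the edge entry with that key, the answer is that edge's child, and otherwise $u$ has no child labelled $c$. Restoring the answers to the original query order through a stored permutation is again $\cO(n)$.

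The step requiring the most care is the encoding that makes the sort both linear and genuinely off-line: the argument hinges on collecting all $\cO(n)$ queries in advance so that they can be sorted together, and on the fact that node identifiers and alphabet letters both fit in $\{0,\ldots,\cO(n)\}$, which is exactly what licenses counting sort in place of a comparison sort or a hash table. Dispatching implicit nodes separately in $\cO(1)$ each is what keeps the batch confined to explicit nodes, where the distinct-first-letter property guarantees unambiguous matching. In the $1$-mappability algorithm this routine is invoked once per value of $j$, on the simultaneously available queries $(u_{i,j-1},x[i+j])$, so its overall contribution is $\cO(mn)$.
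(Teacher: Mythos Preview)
Your proof is correct and follows essentially the same approach as the paper: handle implicit-node queries in $\cO(1)$ each via the unique continuation, then radix sort the explicit-node queries by the key $(\text{node},\text{letter})$ and match them against the (sorted) children lists. The only cosmetic difference is that you merge the edge list and the query list into one array before sorting, whereas the paper sorts the queries, assumes the children of each node are already letter-ordered (or radix sorts them as well), and then iterates through each node's children list once; both variants achieve the same $\cO(n)$ deterministic bound.
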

\begin{proof}
A query for an implicit node $u$ is answered in $\cO(1)$ time, as there is only one outgoing edge to check. All the remaining queries can be sorted lexicographically as pairs $(u,c)$ using radix sort. We can also assume that the children of every explicit node of $\tr(x)$ are ordered by the letter (otherwise we also radix sort them). Finally, all the queries related to a node $u$ can be answered in one go by iterating through the children list of $u$ once.
\end{proof}

Lastly, we use bucket sort to group the triples for each $V(v)$ according to the node variable (recall that the nodes are represented by the edge and the index within the edge) and update the counters in $\cO(n)$ time in total (using a global array indexed by the letters from $\Sigma$, which is zeroed in $\cO(|V(v)|)$ time after each $V(v)$ has been processed). Overall the algorithm requires $\cO(mn)$ time.

The suffix trees require $\cO(n)$ space and we delete the sets $V(v_{i,j})$ after the $j^{th}$ iteration; the space complexity of the algorithm is thus $\cO(n)$. We obtain the following result.

%In order to find nodes $v_{i,0}$ efficiently for each $i$ in the first iteration of the algorithm, we first preprocess $\tr(\textsf{rev}(x))$ off-line in time $\cO(n)$ to have suffix-links for implicit nodes of string-depth $m$ and the respective reverse suffix links (indexed by the letter on the suffix link, with the aid of perfect hashing). We then spell $\textsf{rev}(x[1 \dd m-1])$ and reach $v_{0,0}$. We can then obtain $v_{i+1,0}$ from $v_{i,0}$ by following the reverse suffix-link with label $x[i+m]$ and then going one letter up the tree.
%We can find nodes $v_{i,0}$ for each $i$ in the first iteration of the algorithm naively in $\cO(nm)$ time. Note that $v_{i,j+1}$ can be found in $\cO(1)$ time from $v_{i,j}$ by moving one letter up the tree for all $i$ and $0 \le j \le m-2$.
%Similarly, $u_{i,0}$ for every $i$ is the root of $\tr(x)$, and $u_{i,j+1}$ can be obtained from $u_{i,j}$ by going down the tree based on letter $x[i+j]$.

\begin{theorem}\label{worstcase1}
Given a string $x$ of length $n$ over an integer alphabet and a positive integer $m$, we can solve the \textsc{1-mappability} problem in $\mathcal{O}(mn)$ time and $\mathcal{O}(n)$ space.
\end{theorem}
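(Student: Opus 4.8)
The plan is to establish two things about the array $C$ produced by \textsc{1-Map}: that it is correct, and that the stated bounds hold. Correctness rests on decomposing the Hamming-distance-$1$ relation by the position of the unique mismatch. First I would fix an iteration $j \in \{0,\ldots,m-1\}$ and argue that, for this $j$, the algorithm increments $C[i]$ by exactly the number of positions $i'$ for which the windows $x[i\dd i+m-1]$ and $x[i'\dd i'+m-1]$ agree on every position except position $j$, where they differ. Indeed, the path-label of $u_{i,j}$ in $\tr(x)$ is precisely the prefix $x[i\dd i+j-1]$, and the reversed path-label of $v_{i,j}$ in $\tr(\rev(x))$ is precisely the suffix $x[i+j+1\dd i+m-1]$; hence two triples $(u,c_1,i_1)$ and $(u,c_2,i_2)$ fall into the same $V(v)$-group indexed by the same node $u$ exactly when the two windows share both the length-$j$ prefix and the length-$(m-j-1)$ suffix around position $j$. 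Within such a group of size $q$, the window at $i$ is at Hamming distance exactly $1$ from the window at $i'$ iff their middle letters differ, so the number of valid partners of $i$ is $q$ minus the number $r$ of triples carrying the same letter as $i$, which is exactly the increment $q-r$ applied by the algorithm.

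Next I would sum over $j$ and check that nothing is over- or under-counted. The key point is that if $d_H(x[i\dd i+m-1],x[i'\dd i'+m-1])=1$ then there is a \emph{unique} mismatch position $j^*$, so the ordered pair $(i,i')$ contributes to $C[i]$ in iteration $j^*$ and in no other: for $j<j^*$ the suffix regions disagree (they contain $j^*$), placing the two triples under different nodes $v$, and for $j>j^*$ the prefix regions disagree, placing them under different nodes $u$. Simultaneously, subtracting $r$ guarantees that positions $i'$ with $x[i'\dd i'+m-1]=x[i\dd i+m-1]$ (distance $0$) are never counted. Hence, after all iterations, $C[i]$ equals the number of length-$m$ windows of $x$ at Hamming distance exactly $1$ from $x[i\dd i+m-1]$, which is what \textsc{1-mappability} asks for.

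For the complexity I would account for one outer iteration and multiply by $m$. Building $\tr(x)$ and $\tr(\rev(x))$ costs $\cO(n)$. In iteration $j=0$ the nodes $u_{i,0}$ are the root, and the $v_{i,0}$ are found na\"ively in $\cO(mn)$ time once; for $j>0$ each $v_{i,j}$ is obtained from $v_{i,j-1}$ by moving one letter up (via parent pointers, $\cO(1)$ each) and each $u_{i,j}$ from $u_{i,j-1}$ by a single downward step in $\tr(x)$ consuming one further letter of the window. The only non-trivial primitive is this downward child query; since there are $\cO(n)$ such queries per iteration and they may be resolved off-line, the Observation answers them all in deterministic $\cO(n)$ time per $j$, eliminating randomised hashing. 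Finally, the triples in a fixed iteration number $\cO(n)$ in total, so grouping them by node via bucket sort and applying the $q-r$ updates costs $\cO(n)$ per $j$, using a single alphabet-indexed array cleared in time proportional to each processed group. Summing the $\cO(n)$ per-iteration cost over the $m$ values of $j$ (plus the one-off $\cO(mn)$ for the $v_{i,0}$'s) yields total time $\cO(mn)$; deleting every set $V(v)$ at the end of each iteration keeps the working space at $\cO(n)$ on top of the $\cO(n)$ suffix trees.

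I expect the main obstacle to lie on the complexity side rather than with correctness: namely, performing each incremental navigation step in $\cO(1)$ amortised time while keeping everything deterministic. Moving one letter up in $\tr(\rev(x))$ and one letter down in $\tr(x)$ must be handled uniformly whether the current node is explicit or implicit---a step staying inside an edge is trivial, whereas a step crossing an explicit node requires a genuine child lookup---and it is exactly these lookups that force the off-line batching of the Observation to attain deterministic $\cO(mn)$. The correctness argument, by contrast, reduces to the unique-mismatch-position decomposition together with the $q-r$ bookkeeping, both of which become routine once the prefix/suffix encoding by $u_{i,j}$ and $v_{i,j}$ is in place.
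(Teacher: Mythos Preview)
Your proposal is correct and follows essentially the same approach as the paper: the same incremental maintenance of $u_{i,j}$ and $v_{i,j}$, the same off-line child-query Observation to avoid hashing, and the same per-iteration bucket-sort bookkeeping. Your correctness argument (decomposition by the unique mismatch position and the $q-r$ accounting) is more explicit than what the paper spells out, but the underlying idea is identical.
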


\begin{comment}
\begin{itemize}
\item We initialise an empty list in each node of $\st (\textsf{rev}(x))$.

\item For each position $i$, $0 \leq i \leq n-m$, in $x$:
\subitem We find the $m$-length path  $P=(u_0,u_1, \dd ,u_{m-1})$ for $x[i \dd i+m-1]$ in $\st (x)$
\subitem We find the $P=(v_0,v_1, \dd ,v_{m-1})$ $m$-length path for $\textsf{rev}(x)[n-m-i \dd n-i-1]$ in $\st (\textsf{rev}(x))$
\subitem For $j$, $0 \leq j \leq m-1$, we insert $(u_{m-2-j},i)$ into the node $v_j$.
\item We initialize an array $count[i]$ of length $n$.

\item For every node $v$ of depth $j$ in $\st (\textsf{rev}(x))$:
\subitem We have a set of pairs and we group them by the node variable
\subitem We find $(u, i_1), (u,i_2), \dd , (u,i_q)$
\subitem For each $i_k$ we increment the counter by $q-1$
\end{itemize}
\end{comment}

\subsection{$\cO(n \log^2 n)$-time and $\cO(n)$-space algorithm}

In this section we assume that $x$ is a length-$n$ string over an ordered alphabet $\Sigma$, where $|\Sigma|=\sigma=\cO(1)$. Consider two factors of $x$ represented by nodes $u$ and $v$ in $\tr (x)$; the first observation we make is that the first mismatch between the two factors is the first letter of the labels of the distinct outgoing edges from the lowest common ancestor of $u$ and $v$ %$\text{LCA}_{\tr (x)}(u,v)$ 
that lie on the paths from the root to $u$ and $v$. For $1$-mappability we require that what follows this mismatch is an exact match.

\begin{definition}
Let $T$ be a rooted tree. For each non-leaf node $u$ of $T$, the {\em heavy edge} $(u,v)$ is an edge for which the subtree rooted at $v$ has the maximal number of leaves (in case of several such subtrees, we fix one of them). The {\em heavy path of a node $v$} is a maximal path of heavy edges that passes through $v$ (it may contain 0 edges). The {\em heavy path of $T$} is the heavy path of the root of $T$.
\end{definition}

Consider the suffix tree $\tr(x)$ and its node $u$. We say that an (explicit or implicit) node $v$ is a \emph{level ancestor of $u$ at string-depth $\ell$} if $\mathcal{D}(v)=\ell$ and $\mathcal{L}(v)$ is a prefix of $\mathcal{L}(u)$. The heavy paths of $\tr(x)$ can be used to compute level ancestors of nodes in $\cO(\log n)$ time. However, a more efficient data structure is known.

\begin{lemma}[\cite{DBLP:journals/talg/AmirLLS07}]\label{lem:LAQ}
  After $\cO(n)$-time preprocessing on $\tr(x)$, level ancestor queries of nodes of $\tr(x)$ can be answered in $\cO(\log \log n)$ time per query.
\end{lemma}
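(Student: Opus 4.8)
The plan is to recognise the query as a \emph{weighted level ancestor} problem on $\tr(x)$, where each edge carries a weight equal to its label length and we seek the ancestor at weighted (string-)depth exactly $\ell$. The engine of the construction is the heavy-path decomposition of $\tr(x)$, which we compute in $\cO(n)$ time. Two structural facts drive everything. First, the string-depths of the explicit nodes lying on a single heavy path are strictly increasing as one descends the path, so locating a target string-depth within one heavy path is a predecessor search. Second, the path from the root to any node $u$ crosses only $\cO(\log n)$ heavy paths, because each light edge traversed upwards at least doubles the number of leaves in the current subtree.

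First I would set up the per-path predecessor structures: for each heavy path I store the string-depths of its explicit nodes in a predecessor data structure over the universe $\{0,\dots,n\}$ answering queries in $\cO(\log\log n)$ time and occupying space linear in the number of stored keys. Since every explicit node belongs to exactly one heavy path and the keys on each path already arrive sorted by string-depth, all of these structures are built in $\cO(n)$ total time and space. In parallel I would contract each heavy path to a single vertex to obtain the \emph{tree of heavy paths}, and equip it with a standard unweighted level-ancestor structure answering $\cO(1)$-time queries after $\cO(n)$ preprocessing. This lets me retrieve, for a query node $u$ and an index $k$, the top node $w_k$ of the $k$-th heavy path on the root-to-$u$ path in $\cO(1)$ time, and hence also the exit node $e_k$ (the node of that heavy path from which the light edge towards $u$ departs, which is the parent of $w_{k+1}$), together with the string-depths $\mathcal{D}(w_k)$ and $\mathcal{D}(e_k)$.

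To answer a query $(u,\ell)$ I would first determine which heavy path carries the answer. The intervals $[\mathcal{D}(w_k),\mathcal{D}(e_k)]$ for $k=0,\dots,t$ (with $t=\cO(\log n)$ and the convention $e_t=u$) are disjoint and increasing, separated by the gaps $(\mathcal{D}(e_k),\mathcal{D}(w_{k+1}))$ that correspond to light edges; together they cover $[0,\mathcal{D}(u)]$. Since $\mathcal{D}(w_k)$ is monotone in $k$, I can binary search over $k\in\{0,\dots,t\}$, spending $\cO(1)$ per probe by the contracted-tree queries above, to find in $\cO(\log t)=\cO(\log\log n)$ time the bracket containing $\ell$. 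If $\ell$ lands inside an interval $[\mathcal{D}(w_k),\mathcal{D}(e_k)]$, the level ancestor lies on heavy path $k$ and is found by a single $\cO(\log\log n)$ predecessor search among that path's explicit-node depths, returning either the explicit node of depth $\ell$ or the implicit node at the correct offset on the heavy edge just below the predecessor. If instead $\ell$ lands in a light-edge gap, the answer is an implicit node on the light edge from $e_k$ to $w_{k+1}$, computed directly in $\cO(1)$. Summing the two searches gives $\cO(\log\log n)$ per query, with $\cO(n)$ preprocessing and space.

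I expect the main obstacle to be exactly the ``which heavy path'' step: a naive scan of the $\cO(\log n)$ heavy paths crossed by the root-to-$u$ path already costs $\cO(\log n)$, so the whole point is to make each probe of the binary search run in $\cO(1)$ time. This is what forces the introduction of the contracted heavy-path tree with its constant-time unweighted level-ancestor queries, and it is also where one must be careful that the answer can be an \emph{implicit} node lying strictly inside either a heavy edge or a light edge, so that both qualitatively different cases are handled without incurring an extra logarithmic factor.
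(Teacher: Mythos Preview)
The paper does not prove this lemma at all: it is stated with a citation to~\cite{DBLP:journals/talg/AmirLLS07} and used as a black box, so there is no ``paper's own proof'' to compare against. Your proposal is a correct reconstruction of the standard weighted level ancestor machinery from that reference---heavy-path decomposition, per-path predecessor structures over $\{0,\dots,n\}$ with $\cO(\log\log n)$ queries, plus an unweighted level-ancestor structure on the contracted heavy-path tree to support the $\cO(\log\log n)$-time binary search for the correct heavy path---and the care you take to handle implicit answers on both heavy and light edges is appropriate. In short, you have supplied a valid proof where the paper simply invokes the literature.
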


\tikzset{
itria/.style={
  draw,shape border uses incircle,
  isosceles triangle,shape border rotate=90, xshift=-0.2cm,yshift=-0.25cm},
rtria/.style={
  draw,shape border uses incircle,
  isosceles triangle,isosceles triangle apex angle=90,
  shape border rotate=-45,xshift=-0.2cm,yshift=-0.25cm},
ritria/.style={
  draw,shape border uses incircle,
  isosceles triangle,isosceles triangle apex angle=110,
  shape border rotate=-55,xshift=-0.2cm,yshift=-0.25cm},
letria/.style={
  draw,shape border uses incircle,
  isosceles triangle,isosceles triangle apex angle=90,
  shape border rotate=225,xshift=-0.32cm,yshift=-0.25cm}%,minimum size=0.4cm}
}
%-------------------%------------
\begin{figure}[!t]
\begin{center}
\begin{tikzpicture}

\node[shape=circle,draw=black] (3) at (9,-3) {};  %node u= s(v)

%%%%labels%%%%%%
\node[shape=circle,draw=white, label = {left: $c$}] (ccc) at (8.8,-4) {};
\node[shape=circle,draw=white, label = {left: $d$}] (ddd) at (7.8,-4.2) {};
\node[shape=circle,draw=white, label = {left: $v$}] (vvv) at (9.7,-4.27) {};
\node[shape=circle,draw=white, label = {left: $u$}] (uuu) at (8.1,-3.8) {};
\node[shape=circle,draw=white, label = {left: $u'$}] (u'u'u') at (7.1,-4.8) {};
\node[shape=circle,draw=white, label = {left: $z$}] (zzz) at (9.85,-4.7) {};
\node[shape=circle,draw=white, label = {left: $z'$}] (z'z'z') at (8.9,-5.7) {};

%\node[shape=circle,draw=white] (8) at (8.5,-3.5) {};
\node[shape=circle,draw=black] (8a) at (8,-4) {};
\node[shape=circle,draw=white] (8b) at (9.5,-3.5) {};
\node[letria,draw=black] (8btree) at (9.85,-3.5) {};
\node[shape=circle,draw=white] (8c) at (10.5,-3.44) {};
\node[letria,draw=black] (8ctree) at (10.83,-3.5) {};
\node[shape=circle,draw=white] (8d) at (11.5,-3.42) {};
\node[letria,draw=black] (8dtree) at (11.83,-3.5) {};
\draw [->,decorate,draw=red] (3) -- (8a);
\draw [->,decorate] (3) -- (8b);
\draw [->,decorate] (3) -- (8c);
\draw [->,decorate] (3) -- (8d);

%%%

\node[shape=circle,draw=white] (9) at (8.5,-4.5) {};
\node[letria,draw=black] (9tree) at (8.85,-4.5) {};
\node[shape=circle,draw=black] (9a) at (7,-5) {};
\node[shape=circle,draw=white, label = {right: $S_i$}] (9b) at (9.5,-4.44) {};
\node[letria,draw=black] (9btree) at (9.83,-4.5) {};
\draw [->,decorate] (8a) -- (9);
\draw [->,decorate,draw=red] (8a) -- (9a);
\draw [->,decorate] (8a) -- (9b);

\node[shape=circle,draw=white] (10) at (7.5,-5.5) {};
\node[letria,draw=black] (10tree) at (7.85,-5.5) {};
\node[shape=circle,draw=black] (10a) at (5.5,-6.5) {};
\node[shape=circle,draw=white] (10b) at (8.5,-5.44) {};
\node[letria,draw=black] (10btree) at (8.83,-5.5) {};

\draw [->,decorate] (9a) -- (10);
\draw [->,decorate,decoration=snake,draw=red] (9a) -- (10a);
\draw [->,decorate] (9a) -- (10b);

\node[shape=circle,draw=white] (11) at (6,-7) {};
\node[letria,draw=black] (11tree) at (6.35,-7) {};
\node[shape=circle,draw=black] (11a) at (4.5,-7.5) {};
\node[circle,scale=0.6,draw=black] (term) at (4.5,-7.5) {};
\node[shape=circle,draw=white] (11b) at (7,-6.94) {};
\node[letria,draw=black] (11btree) at (7.33,-7) {};
\draw [->,decorate] (10a) -- (11);
\draw [->,decorate,draw=red] (10a) -- (11a);
\draw [->,decorate] (10a) -- (11b);
\end{tikzpicture}
\end{center}
\caption{Illustration; the heavy path of $\tr(x)$ is shown in red.}
\label{fig:algo}
\end{figure}
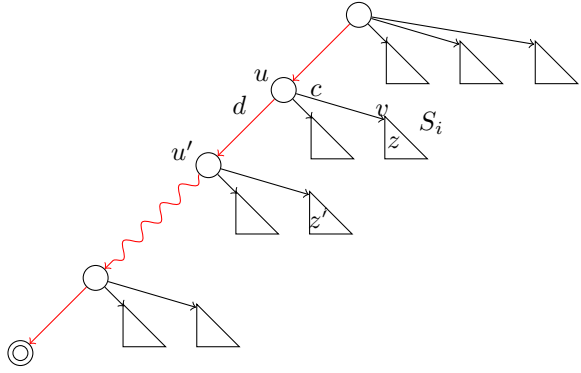

\vspace*{-0.1cm}
\begin{definition}
Given a string $x$ and a factor $y$ of $x$, we denote by $\range(x, y)$ the range in the \textsf{SA} of $x$ that represents the suffixes of $x$ that have $y$ as a prefix.
\end{definition}

Every node $u$ in $\tr(x)$ corresponds to an \textsf{SA} range $I_u = \range(x, \mathcal{L}(u))=(u_{\min}, u_{\max})$.
We can precompute $I_u$ for all explicit nodes $u$ in $\tr(x)$ in $\cO(n)$ time while performing a depth-first traversal of the tree as follows.
For a non-terminal node $v$ with children $u^1, \ldots ,u^q$, we set $v_{\min}=\min_i \{ u^i_{\min} \}$ and $v_{\max}=\max_i \{ u^i_{\max} \}$. If $v$ is a terminal node (with children $u^1, \ldots ,u^q$), representing the suffix $x[j \dd n-1]$, we set $v_{\min}=\textsf{iSA}[j]$ and $v_{\max}=\max \{ \textsf{iSA}[j] , \max_i \{ u^i_{\max} \} \}$. When a considered node $v$ is implicit, say along an edge $(p,q)$, then $I_v=I_q$.

Our algorithm relies heavily on the following auxiliary lemmas.

\begin{lemma}\label{suffix-of-node-u}
Consider a node $u$ in $\tr(x)$ with $p = \mathcal{L}(u)$.
Let $\textit{suf}(u, \ell)$ be the node $v$ such that $\mathcal{L}(v) = p[\ell \dd |p|-1]$.
Given the \textsf{SA} and the \textsf{iSA} of $x$, $v$ can be computed in $\cO(\log \log n)$ time after $\cO(n)$-time preprocessing.
\end{lemma}
\begin{proof}
The \SA{} range of the node $u$ is $I_u = (u_{\min}, u_{\max})$;
$u_{\min}$ corresponds to the suffix $x[\SA[u_{\min}] \dd n-1]$.
By removing the first $\ell$ letters, the suffix becomes
$x[\SA[u_{\min}]+\ell \dd n-1]$.
The corresponding \SA{} value is
$v_{\min} = \iSA[\SA[u_{\min}]+\ell]$.

Let $v_1$ be the node of $\tr(x)$ such that ${\cal L}(v_1)=x[\SA [v_{\min}]\dd n-1]$.
The sought node $v$ is the ancestor of $v_1$ located at string-depth $|p|-\ell$.
It can be computed in $\cO(\log \log n)$ time using the level ancestor data structure of Lemma~\ref{lem:LAQ}.
\end{proof}

\begin{lemma}\label{pcp}
Let $u$ and $v$ be two nodes in $\tr(x)$.
We denote ${\cal L}(u)$ by $p_1$ and ${\cal L}(v)$ by $p_2$.
We further denote by $\text{concat}(u, v)$ the node $w$ such that ${\cal L}(w) = p_1 p_2$.
Given the \textsf{SA} and the \textsf{iSA} of $x$, as well as $\range(x, p_1)$ and $\range(x, p_2)$, $w$ can be located in $\cO(\log n)$ time after $\cO(n)$-time preprocessing.
\end{lemma}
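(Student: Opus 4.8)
The plan is to reduce locating $w$ to first computing the suffix-array range $\range(x, p_1 p_2)$ and then converting that range into a suffix-tree node via a single level-ancestor query. Write $I_u = \range(x,p_1) = (u_{\min}, u_{\max})$ and $I_v = \range(x,p_2) = (v_{\min}, v_{\max})$, and note that $|p_1| = \mathcal{D}(u)$ and $|p_2| = \mathcal{D}(v)$ are available from the two nodes. Every suffix whose rank lies in $[u_{\min}, u_{\max}]$ has $p_1$ as a prefix, so the suffixes that start with $p_1 p_2$ are exactly those $x[\SA[r] \dd n-1]$ with $r \in [u_{\min}, u_{\max}]$ for which the shifted suffix $x[\SA[r] + |p_1| \dd n-1]$ begins with $p_2$; equivalently, those $r$ for which $\iSA[\SA[r] + |p_1|] \in [v_{\min}, v_{\max}]$.

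The key observation I would establish is that the map $f(r) = \iSA[\SA[r] + |p_1|]$ is strictly increasing on $[u_{\min}, u_{\max}]$. Over this range the suffixes $x[\SA[r]\dd n-1]$ are listed in increasing lexicographic order and all share the common prefix $p_1$; deleting this common prefix preserves their relative order, so the shifted suffixes $x[\SA[r]+|p_1| \dd n-1]$ are also increasing in $r$, and since they are pairwise distinct suffixes their ranks $f(r)$ strictly increase. Consequently the set of $r$ with $f(r) \in [v_{\min}, v_{\max}]$ is a contiguous subinterval $[r_1, r_2] \subseteq [u_{\min}, u_{\max}]$, and this subinterval is exactly $\range(x, p_1 p_2)$.

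I would then locate $r_1$ and $r_2$ by two binary searches over $[u_{\min}, u_{\max}]$: $r_1$ is the least $r$ with $f(r) \ge v_{\min}$ and $r_2$ the greatest $r$ with $f(r) \le v_{\max}$. Each probe evaluates $f(r) = \iSA[\SA[r]+|p_1|]$ in $\cO(1)$ time using the given \SA{} and \iSA{} arrays, so the two searches together take $\cO(\log n)$ time. Finally, to turn the range $(w_{\min}, w_{\max}) = (r_1, r_2)$ into the node $w$, I take the suffix $x[\SA[w_{\min}]\dd n-1]$, look up its terminal node $v_1$ through a precomputed table mapping starting positions to terminal nodes, and return the ancestor of $v_1$ at string-depth $|p_1| + |p_2|$; since $\mathcal{L}(v_1)$ has $p_1 p_2$ as a prefix, this ancestor is precisely $w$, and by \cref{lem:LAQ} the level-ancestor query costs $\cO(\log\log n)$. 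The total running time is thus $\cO(\log n)$, and the only preprocessing beyond \SA{}/\iSA{} is the $\cO(n)$-time level-ancestor structure of \cref{lem:LAQ} together with the position-to-terminal-node table, mirroring the construction already used in \cref{suffix-of-node-u}.

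The main point requiring care is the monotonicity of $f$, which is what makes the binary search correct; once it is in hand the remainder is bookkeeping. I would also dispatch the boundary cases separately: if $p_1 p_2$ does not occur in $x$, the binary search returns an empty range and $w$ is reported as non-existent; and if some shifted suffix is empty (the case $\SA[r] + |p_1| = n$, which arises only when a suffix equals $p_1$ exactly), it is treated as lexicographically smallest and hence correctly excluded from the answer range, consistently with the monotonicity argument.
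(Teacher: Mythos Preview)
Your proposal is correct and follows essentially the same route as the paper. The paper's proof merely cites external references for the fact that $\range(x,p_1p_2)$ can be computed from $\range(x,p_1)$, $\range(x,p_2)$, \SA{} and \iSA{} in $\cO(\log n)$ time, and then applies \cref{lem:LAQ}; you have spelled out that black box via the monotonicity of $f(r)=\iSA[\SA[r]+|p_1|]$ and two binary searches, which is exactly the underlying technique in those references.
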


\begin{proof}
We can compute $\range(x, p_1 p_2)=(w_{\min}, w_{\max})$ in $\cO(\log n)$ time using the \textsf{SA} and the \textsf{iSA} of $x$~\cite{G97,Huynh:2006:ASM:1142859.1142877}; we can then locate $w$ in $\cO(\log \log n)$ time using the level ancestor data structure of Lemma~\ref{lem:LAQ}.
\end{proof}

%\begin{lemma}[\cite{G97,Huynh:2006:ASM:1142859.1142877}]\label{pcp}
%Suppose that we are given two factors $P_1$ and $P_2$ of a string $x$ of length $n$ over a fixed-sized alphabet $\Sigma$, a letter $c \in \Sigma$, $range(x, P_1)$ and $range(x, P_2)$. We can then compute $range(x, P_1 c P_2)$ in $\cO(\log n)$ time given the \textsf{SA} and the \textsf{iSA} of $x$.
%\end{lemma}

We are now ready to present an algorithm for $1$-mappability that requires $\cO(n \log^2 n)$ time and $\cO(n)$ space.
The first step is to build $\tr(x)$.
We then make every node $u$ of string-depth $m$ explicit in $\tr(x)$ and initialize a counter $\textit{Count}(u)$ for it.
 For each explicit node $u$ in $\tr(x)$, the \textsf{SA} range $I_u = \range(x, {\cal L}(u))$ is also stored.
We also identify the node $v_{c}$ with path-label $c$ for each $c \in \Sigma$ in $\cO(\sigma)=\cO(1)$ time.
 
\medskip
\medskip
\begin{algo}{PerformCount}{T,m}
\SET{\textit{HP}}{\Call{HeavyPath}{T}}
\DOFOR{\mbox{each side-tree $S_i$ attached to a node $u$ on $\textit{HP}$ with $\mathcal{D}(u)<m$}}
	\ACT{\text{Let $(u,v)$ be the edge that connects $S_i$ to $\textit{HP}$}}
    \SET{c}{\text{the edge label of $(u, v)$}}  
    \SET{d}{\text{the edge label of the heavy edge $(u, u')$}} 
	\DOFOR{\mbox{each node $z$ in $S_i$ with $\mathcal{D}(z)=m$}}
        \SET{w}{\text{suf}(z, \mathcal{D}(u)+1)}
        \DOFOR{\mbox{each $c' \neq c$, label of an outgoing edge from $u$}}
            \SET{t}{\text{concat}(u, \text{concat}(v_{c'}, w))}
        	\SET{\textit{Count}(z)}{\textit{Count}(z) + |I_t|}
        \OD
        \SET{z'}{\text{concat}(u, \text{concat}(v_d, w))}
%        \SET{t}{\text{concat}(u, \text{concat}(v_c, w))}
        \SET{\textit{Count}(z')}{\textit{Count}(z') + |I_z|}
	\OD
\CALL{PerformCount}{S_i,m-\mathcal{D}(u)}
\OD
\end{algo}
 
\medskip
\medskip
We then call \textsc{PerformCount}$(\tr(x),m)$, which does the following (inspect also the pseudocode above and Figure~\ref{fig:algo}). At first, a heavy path $\textit{HP}$ of $\tr(x)$ is computed. Initially, we want to identify the pairs of factors of $x$ of length $m$ at Hamming distance $1$ that have a mismatch in the labels of the edges outgoing from a node in $\textit{HP}$.
Given a node $u$ in $\textit{HP}$, with $\mathcal{L}(u)=p_1$, for every side tree $S_i$ attached to it (say by an edge with label $c \in \Sigma$), we find all nodes of $S_i$ with string-depth $m$. For every such node $z$, with path-label $p_1 c p_2$, we use Lemma~\ref{suffix-of-node-u} to obtain the node $w = \text{suf}(z, |p_1|+1)$; that is, ${\cal L}(w)=p_2$. We then use Lemma~\ref{pcp} to compute $\range(x, p_1 c' p_2)$ for all $c' \neq c$ such that there is an outgoing edge from $u$ with label $c'$ and increment $\textit{Count}(z)$ by $|\range(p_1 c' p_2)|$. Let the heavy edge from $u$ have label $d$; we also increment $\textit{Count}(z')$, where $z'=\text{concat}(u, \text{concat}(v_{d}, w))$ is the node with path-label $p_1 d p_2$, by $|I_z|$ while processing node $z$.

This procedure then recurs on each of the side trees; i.e.~for side tree $S_i$, attached to node $u$, it calls \textsc{PerformCount}$(S_i,m-\mathcal{D}(u))$.
Finally, we construct array $C$ from array $\textit{Count}$ while performing one more depth-first traversal.

On the recursive calls of \textsc{PerformCount} in each of the side trees (e.g.~$S_i$) attached to $\textit{HP}$, we first compute the heavy paths (in $\cO(|S_i|)$ time for $S_i$) and then consider each node of string-depth $m$ of $\tr(x)$ at most once; as above, we process each node in $\cO(\log n)$ time due to Lemmas~\ref{suffix-of-node-u} and \ref{pcp}. As there are at most $n$ nodes of string-depth $m$, we do $\cO (n \log n)$ work in total. This is also the case as we go deeper in the tree. Since the number of leaves of the trees we are dealing with at least halves in each iteration, there at most $\cO(\log n)$ steps. Hence, each node of string-depth $m$ will be considered $\cO(\log n)$ times and every time we will do $\cO(\log n)$ work for it. The overall time complexity of the algorithm is thus $\cO (n \log^2 n)$. The space complexity is clearly $\cO(n)$. By applying Theorem~\ref{worstcase1} we obtain the following result.

\begin{comment}
By the definition of the heavy path, each side tree $S_i$ has less than half of the leaves of $\tr$.
In the next step, the number of string-depth $m$ nodes, denoted by $dm_i$ is upper bounded by the number of leaves in $|S_i|$ and hence we will do at most $ \sum_i{|S_i| \log n} \leq n \log n$ work. As the numbers of leaves of the trees at least halves each time, we will do at most $\log n$ such steps. 
\end{comment}

\begin{theorem}\label{worstcase2}
Given a string $x$ of length $n$ over a fixed-sized alphabet and a positive integer $m$, we can solve the \textsc{1-mappability} problem in $\mathcal{O}(\min \{mn, n\log^2 n \})$ time and $\mathcal{O}(n)$ space.
\end{theorem}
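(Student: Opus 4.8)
The plan is to combine the two worst-case algorithms already developed in the section and show that taking the faster of the two for any given input yields the claimed bound. Theorem~\ref{worstcase1} gives an $\cO(mn)$-time, $\cO(n)$-space algorithm over an integer alphabet, which applies in particular to the fixed-sized alphabet here; and the $\cO(n\log^2 n)$-time, $\cO(n)$-space algorithm of the preceding subsection was analysed for a string over an ordered alphabet with $\sigma=\cO(1)$. Since both algorithms solve the \textsc{1-mappability} problem correctly and both use $\cO(n)$ space, running the appropriate one depending on the relationship between $m$ and $\log^2 n$ gives time $\cO(\min\{mn, n\log^2 n\})$ and space $\cO(n)$.

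First I would observe that the two algorithms need not be run simultaneously nor interleaved: we may simply select which one to execute based on the input parameters $m$ and $n$. Concretely, if $m \le \log^2 n$ then $mn \le n\log^2 n$, so we invoke the $\cO(mn)$ algorithm of Theorem~\ref{worstcase1}; otherwise $n\log^2 n < mn$, so we invoke the $\cO(n\log^2 n)$ algorithm. In either case the running time is $\cO(\min\{mn, n\log^2 n\})$, which is exactly the quantity we can compute in advance in $\cO(1)$ time to decide the branch. The space used by either branch is $\cO(n)$, and since only one of them is ever run, the overall space remains $\cO(n)$.

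The remaining point to verify is that the $\cO(n\log^2 n)$ algorithm is indeed correct and achieves its stated bounds, which was argued immediately before the statement: the recursion over heavy paths considers each string-depth-$m$ node $\cO(\log n)$ times (because the number of leaves at least halves at each recursive level), spending $\cO(\log n)$ per node via Lemmas~\ref{suffix-of-node-u} and~\ref{pcp}, for a total of $\cO(n\log^2 n)$ time, with the level-ancestor preprocessing of Lemma~\ref{lem:LAQ} and the $\cO(n)$-space suffix tree keeping space linear. I would simply cite this preceding analysis together with the correctness of the heavy-path decomposition argument, and cite Theorem~\ref{worstcase1} for the integer-alphabet algorithm.

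I do not anticipate a genuine obstacle here, since the theorem is essentially a packaging result: the hardest content has already been established in the two constructions and their analyses. The only mild care needed is to note that the fixed-sized-alphabet assumption is the more restrictive of the two alphabet models used, so both algorithms are applicable, and to confirm that the branch-selection overhead is negligible. Thus the proof reduces to a short combination argument.

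\begin{proof}
We run one of the two previously described algorithms depending on the input. If $m \le \log^2 n$, we run the algorithm of Theorem~\ref{worstcase1}, which solves the problem in $\cO(mn)$ time and $\cO(n)$ space over any integer alphabet, hence in particular over our fixed-sized alphabet; here $mn \le n\log^2 n$. Otherwise $n\log^2 n < mn$, and we run the $\cO(n\log^2 n)$-time, $\cO(n)$-space algorithm described above, whose correctness and complexity follow from the preceding analysis based on the heavy-path decomposition together with Lemmas~\ref{suffix-of-node-u}, \ref{pcp}, and~\ref{lem:LAQ}. The comparison of $m$ with $\log^2 n$ takes $\cO(1)$ time, only one algorithm is executed, and each uses $\cO(n)$ space. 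Hence the \textsc{1-mappability} problem is solved in $\cO(\min\{mn, n\log^2 n\})$ time and $\cO(n)$ space.
\end{proof}
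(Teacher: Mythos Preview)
Your proposal is correct and matches the paper's approach: the paper likewise derives the theorem by combining the $\cO(n\log^2 n)$ algorithm just analysed with Theorem~\ref{worstcase1}, simply picking the faster of the two. Your write-up is somewhat more explicit about the branch-selection step, but the substance is identical.
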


\section{Final Remarks}\label{sec:conc}

The natural next aim is to either extend the presented solutions to work for arbitrary $k$ without increasing the time and space complexities dramatically or develop fundamentally new algorithms if this is not possible. In fact, we already know that the fast average-case algorithm presented in Section~\ref{sec:fast} can be generalized to work for arbitrary $k$ in linear time. This adds, however, a multiplicative factor of $k$ on the condition for the value of $m$. An interesting generalization of this problem would be to consider the edit distance model instead of the Hamming distance model; i.e.~apart from mismatches to also allow for letter insertions and deletions.

Furthermore, a practical extension of the aforementioned problem is the following. Given reads from a particular sequencing machine, the basic strategy for genome re-sequencing is to map a {\em seed} of each read in the genome and then try and {\em extend} this match~\cite{BLAST}. In practice, a seed could be for example the first $32$ letters of the read---the accuracy is higher in the prefix of the read. It is reasonable to allow for a few (e.g.~$k=2$) errors when matching the seed to the reference genome to account for sequencing errors and genetic variation. Hence a closely-related problem to genome mappability that arises naturally from this application is the following: {\em What is the minimal value of $m$ that forces $\alpha\%$ of starting positions in the reference genome to have $k$-mappability equal to $0$?} 

A standard implementation of the algorithm presented in Section~\ref{sec:fast},
when applied to a large sequence of length $n$, requires more than $20n$ bytes of internal memory. Such memory requirements are a significant hurdle to the mappability computation for large datasets on standard workstations. Another direction of practical interest is thus to devise efficient algorithms for the problems of $1$-mappability and $k$-mappability for the External Memory model of computation. %Efficient algorithms for computing the suffix array~\cite{SA-EM} and the longest common prefix array~\cite{LCP-EM} in this model are already known and shown to perform well in practical terms. Since the average-case algorithm in Section~\ref{sec:fast} scans the longest common prefix array from left to right sequentially, it would be interesting to see whether it can be implemented efficiently in external memory.
Efficient algorithms for computing the suffix array and the longest common prefix array in this model are already known and shown to perform well in practical terms (see~\cite{SA-EM}, for example). Since the average-case algorithm in Section~\ref{sec:fast} scans the longest common prefix array from left to right sequentially, it would be interesting to see whether it can be implemented efficiently in external memory.

\begin{comment}
\section{Extension to $k$-mappability}

\subsection{Algorithm}

\subsection{Algorithm Analysis}

Let $B$ denote the number of fragments and $L$ denote the fragment length. We have that

$$B = \frac{(k+2)\cdot n}{m} $$

$$\textsf{VER} \cdot\frac{f^\prime \cdot n}{\sigma ^ \frac{m}{k+2}}=\frac{VER\cdot(k+2)\cdot n^2}{m \cdot  \sigma ^ \frac{m}{k+2}}=\frac{m\cdot(k+2)\cdot n^2}{m \cdot \sigma ^ \frac{m}{k+2}} \leq c \cdot n $$
$$\textsf{VER} \frac{f^\prime \cdot n}{\sigma ^ \frac{m}{k+2}}=\frac{VER\cdot(k+2)\cdot n^2}{m \cdot \sigma ^ \frac{m}{k+2}}=\frac{(k+2)\cdot n}{ \sigma ^ \frac{m}{k+2}} \leq c\cdot n $$
$$\log(k+2) \leq \frac{m}{k+2} \cdot \log \sigma + \log c$$
$$m\geq k+2 \cdot \frac{(\log( (k+2)\cdot n ) - \log c)}{\log \sigma}  $$
$$m=\Omega \Big( k \cdot \frac{\log k+\log n}{log \sigma}\Big)=\Omega \Big( \frac{k \log n}{\log \sigma}\Big)$$
\end{comment}

\bibliographystyle{plainurl}
\bibliography{reference}

\end{document}